\documentclass[runningheads]{llncs}

\usepackage{amsmath,amssymb,amsfonts}
\usepackage{mathtools}
\usepackage{algorithm}
\usepackage{algpseudocode}
\usepackage{microtype}

\title{Algorithm for Interpretable Graph Features via Motivic Persistent Cohomology
}
\titlerunning{Algorithm for Interpretable Graph Features}
\author{Yoshihiro Maruyama}
\authorrunning{Y. Maruyama}
\institute{School of Informatics, Nagoya University, Japan\\
\email{maruyama@i.nagoya-u.ac.jp}}

\begin{document}
\maketitle

\begin{abstract}
We present the Chromatic Persistence Algorithm (CPA), an event–driven method for computing persistent cohomological features of weighted graphs via graphic arrangements, a classical object in computational geometry. We establish rigorous complexity results: CPA is exponential in the worst case, fixed–parameter tractable in treewidth, and nearly linear for common graph families such as trees, cycles, and series–parallel graphs. Finally, we demonstrate its practical applicability through a controlled experiment on molecular-like graph structures.
\keywords{Computational geometry \and persistent homology \and graph}
\end{abstract}

\section{Introduction}
\label{sec:intro}

Persistent homology (PH) has become a cornerstone of topological data analysis, 
furnishing a mathematically grounded framework for extracting multiscale topological 
signatures from data \cite{EdelsbrunnerHarer2010,Ghrist2008,ZomorodianCarlsson2005,CohenSteinerHarer2007}. 
Its ability to summarize structural information robustly has led to successful 
applications in fields as diverse as computational geometry, machine learning, 
biology, and network science. These developments have positioned PH as both a 
theoretical paradigm and a practical tool for analyzing complex data structures. 

Yet the standard PH pipeline captures only the ranks of homology groups along a
filtration, and in doing so discards finer algebraic information about the space. 
In many settings, especially when filtrations come from algebraic or 
combinatorial constructions (for example, spaces coming from hyperplane arrangements or from constructions on graphs), one can enrich the classical barcodes by keeping track 
of additional structure that refines the underlying invariants. 
This paper develops a graph-specialized version of that idea and explores its 
algorithmic consequences.

We introduce the \emph{Chromatic Persistence Algorithm (CPA)}, an event–driven
method for analyzing weighted graphs. CPA processes a graph along a natural
\emph{threshold filtration}—adding edges one by one in weight order—and
computes at each step: (i) a polynomial that summarizes the global structure
of the graph at that threshold, and (ii) a \emph{jump}, namely the cohomological 
change that occurs when a new edge is inserted. The algorithm evaluates only at
actual ``events'' (edge insertions), making it conceptually simple and
computationally efficient.

The approach relies on two classical results from graph and arrangement theory.
The first links the \emph{chromatic polynomial} $\chi_H(q)$ of a graph $H$ to the
\emph{Poincar\'e polynomial} of the complement of its associated arrangement
\cite{OrlikSolomon1980,OrlikTerao1992,Dimca2017,BaranovskySazdanovic2012}, so that graph
colorings directly capture topological invariants. In fact, in the Hodge--Tate case relevant to graphic arrangements, the same identity yields the
\emph{Hodge--Deligne polynomial} $E(M(H);u,v)$ by the direct specialization
$E(M(H);u,v)=\chi_H(uv).$
The second ingredient is a \emph{deletion–contraction identity}, which describes 
how these invariants change when a single edge is added or contracted
\cite{Dimca2017}. Together, these yield a computable combinatorial foundation for CPA: 
every update reduces to computing chromatic polynomials and applying an algebraic 
correction. We establish the complexity guarantees for the Chromatic Persistence 
Algorithm: exponential time in the worst case, fixed–parameter tractable (FPT) in 
treewidth, and near–linear on common graph families such as trees, cycles, and 
series–parallel graphs 
\cite{JaegerVertiganWelsh1990,Noble1998,Bodlaender1996,Cygan2015}.

Concretely, the problem setting is as follows. 
Let $G=(V,E,w)$ be a finite weighted graph with distinct edge weights 
$t_1<\cdots<t_m$. Write $H_j:=G_{\le t_j}$ for the threshold subgraph and 
let $e_j$ be the unique edge added at step $j$. For each $H_j$ we consider 
the graphic arrangement $\mathcal{A}(H_j)\subset\mathbb{C}^{|V|}$ and its 
complement $M(H_j)$. The goal is to compute the Hodge--Deligne polynomial 
$E(M(H_j);u,v)$ of $M(H_j)$ at all thresholds, together with the jump contributed 
by the addition of $e_j$. In the Hodge–Tate situation, this reduces to evaluating the chromatic polynomial 
$\chi_{H_j}(uv)$ for each threshold; algorithmically, we compute only the 
contracted-minor $\chi_{H_j/e_j}(q)$ and update $E_j$ via the deletion--contraction 
recurrence $E_j(u,v)=E_{j-1}(u,v)-\chi_{H_j/e_j}(uv)$.

Our approach to the problem can be summarized as follows. 
We leverage the two identities above to turn the problem into a discrete,
event–driven computation. (i) The chromatic$\to$Poincaré identity
(Prop.~\ref{prop:chromatic-to-poincare}) converts topology of the arrangement
complement into combinatorics of colorings, so at each threshold we obtain
$E(M(H_j);u,v)=\chi_{H_j}(uv)$. 
(ii) The deletion–contraction jump identity
(Thm.~\ref{thm:delcon-jump}) expresses the change caused by adding one
edge; at the $E$–polynomial level,
$E(M(H_{j-1});u,v)-E(M(H_j);u,v) \;=\; E(M(H_j/e_j);u,v).$
Algorithmically, we therefore process only events (edge insertions): 
at step~$j$ we compute $\chi_{H_j/e_j}(q)$ and update $E_j$ via 
the deletion--contraction recurrence 
$E_j(u,v)=E_{j-1}(u,v)-\chi_{H_j/e_j}(uv)$, 
together with the barcode zeta update, a generating function that encodes all jumps 
compactly. This yields a CPA routine with provable guarantees: exponential in 
the worst case, fixed–parameter tractable in treewidth, and near–linear on trees, cycles, 
and series–parallel graphs. Experimental verification with molecule-like graphs 
is provided as well.

From a graph–algorithms viewpoint, CPA reframes persistence on graphs as a 
sequence of \emph{combinatorial} updates on chromatic polynomials, exploiting 
deletion–contraction and dynamic programming (DP) on tree decompositions. This 
connects topological summarization directly to the Tutte/chromatic toolbox and 
to parameterized complexity, enabling principled worst–case analyses alongside 
practical near–linear behavior on sparse, low–treewidth inputs typical of 
molecular backbones \cite{JaegerVertiganWelsh1990,Noble1998,Bodlaender1996}. 
For chemical graph machine learning, CPA provides deterministic and 
interpretable \emph{global} ring– and cycle–sensitive features 
(via $E$–polynomials and jump classes). 
These features complement widely used local or message–passing descriptors, 
such as ECFP-style circular fingerprints (extended connectivity fingerprints, 
graph-based molecular descriptors that encode local atom–bond neighborhoods 
into fixed-length bit vectors), Weisfeiler–Lehman kernels, and neural message 
passing. Unlike these local descriptors, CPA encodes arrangement/chromatic 
structure that reflects ring size and related graph families 
\cite{Shervashidze2011,Gilmer2017,Duvenaud2015,Xu2019,Morris2019}.
These features drop in as standalone inputs or additional channels within GNNs. 
Because they are event–driven and algebraically controlled, they are stable to 
threshold rescalings and amenable to caching, incremental updates, and integration 
with classical graph–decomposition pipelines.

The rest of the paper is organized as follows. 
Section~\ref{sec:prelim} gives mathematical preliminaries. 
Section~\ref{sec:cpa} provides Algorithm~CPA and proves the correctness theorem. 
Section~\ref{sec:complexity} proves the complexity theorem with special–class 
speedups. Experimental verification is provided in Section~\ref{sec:expts}.

\section{Mathematical Preliminaries and Notation}
\label{sec:prelim}

Graphs are assumed to be finite and simple. 
Throughout, we write $n := |V|$ and $m := |E|$. 

\begin{definition}[Threshold chain]
A weighted graph is a simple graph $G=(V,E)$ together with a weight 
function $w : E \to \mathbb{R}$. 
Assuming all edge weights are distinct, we can order them as 
$t_1 < \cdots < t_m$, where $m = |E|$. 
The threshold chain is the sequence of subgraphs
\[
H_j := G_{\le t_j}, \quad j=1,\dots,m,
\]
where $H_j$ contains all edges of weight at most $t_j$, 
and $e_j$ denotes the unique edge added at step $j$.
\end{definition}

We write $O^\ast(\cdot)$ for time/space bounds up to factors polynomial in 
$n$ and $m$. The notation $\mathrm{tw}(H)$ denotes the treewidth of a graph 
$H$. For graphs of bounded treewidth, chromatic polynomials can be computed in 
$f(w)\,\mathrm{poly}(n)$ time and $g(w)\,\mathrm{poly}(n)$ space, where $w=\mathrm{tw}(H)$ 
and $f,g$ are computable functions, via dynamic programming (DP) on a tree 
decomposition \cite{Noble1998,Cygan2015}.

\subsection{From Graphic Arrangements to Poincar\'e Polynomials}

Given a (simple) graph $H=(V,E)$ with $n=|V|$, the \emph{graphic arrangement} in 
$\mathbb{C}^n$ is
\[
\mathcal{A}(H):=\{\,x_i-x_j=0:\{i,j\}\in E\,\}, \qquad 
M(H):=\mathbb{C}^n\setminus \bigcup_{H_{ij}\in\mathcal{A}(H)} H_{ij}.
\]

Let $\mathrm{MHS}$ denote the category of rational mixed Hodge structures and 
$K_0(\mathrm{MHS})$ its Grothendieck ring \cite{OrlikTerao1992,Dimca2017}
(readers unfamiliar with Hodge theory may safely skip the following 
background; in our graph setting it suffices to know that the $E$–polynomial 
reduces to the chromatic polynomial evaluated at $uv$). 
For a complex variety $X$, the \emph{Hodge--Deligne polynomial} 
(aka. \emph{$E$--polynomial}) is defined by
\[
E(X;u,v) \,=\, \sum_{k\ge 0} (-1)^k \sum_{p,q} h^{p,q}\!\big(H_c^k(X;\mathbb{Q})\big)\, u^p v^q,
\]
which factors through a ring morphism 
$E:K_0(\mathrm{MHS}) \to \mathbb{Z}[u,v]$.\footnote{$H_c^k(X;\mathbb{Q})$ denotes the $k$-th compactly supported cohomology group of $X$ with rational coefficients, equipped with a mixed Hodge structure. 
The integers $h^{p,q}(H_c^k(X;\mathbb{Q}))$ are its Hodge numbers, i.e.\ the 
dimensions of the $(p,q)$-graded pieces. The variables $u,v$ are formal markers 
for the Hodge bidegrees, and the factor $(-1)^k$ reflects the alternating sum 
over cohomological degree, in analogy with the Euler characteristic. 
In the Hodge--Tate situation (which holds for graphic arrangements), 
all nonzero contributions lie on the diagonal $p=q$, so $E(X;u,v)$ depends only 
on the product $uv$. In this case one can pass directly from the chromatic 
polynomial to the $E$--polynomial via $E(M(H);u,v)=\chi_H(uv)$.}
In the Hodge--Tate case (i.e., when the Hodge numbers vanish unless $p=q$), 
all contributions lie on the diagonal $p=q$. 
Hence $E(X;u,v)$ depends only on the product $uv$, and for graphic arrangements 
it is given directly by $E(M(H);u,v)=\chi_H(uv)$.

\begin{proposition}[Hodge--Tate purity for graphic complements]
\label{prop:purity-graphic}
For every $k\ge 0$, $H^k(M(H);\mathbb{Q})$ is pure Hodge--Tate of type $(k,k)$ 
(all Hodge numbers vanish unless $p=q=k$). Equivalently,
\[
E(M(H);u,v)=(uv)^{n}\,P_{M(H)}\!\Bigl(-\tfrac{1}{uv}\Bigr)
=\sum_{k=0}^{n}(-1)^k\,b_k(H)\,(uv)^{\,n-k},
\]
where $n=|V|$ and $b_k(H)=\mathrm{rank}\,H^k(M(H);\mathbb{Q})$.
\end{proposition}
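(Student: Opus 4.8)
\emph{Proof plan.} The plan is to split the statement into two parts that we treat in order: first the Hodge--Tate purity of each $H^k(M(H);\mathbb{Q})$, and then the closed form for $E(M(H);u,v)$, which will follow formally from Poincar\'e duality and Artin vanishing once purity is known. I expect the purity statement to carry essentially all of the content, the $E$--polynomial identity being bookkeeping.

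For the purity, I would start from the fact that $M(H)$ is the complement in $\mathbb{C}^n$ of the hypersurface $\{\prod_{\{i,j\}\in E}(x_i-x_j)=0\}$, hence a smooth affine variety of dimension $n$. Next I would invoke Brieskorn's theorem (see \cite{OrlikTerao1992,Dimca2017}): the graded algebra $H^*(M(H);\mathbb{Q})$ is generated in degree one, with $H^1$ spanned by the logarithmic classes $\omega_{ij}=\tfrac{1}{2\pi\mathrm{i}}\,\tfrac{d(x_i-x_j)}{x_i-x_j}$ for $\{i,j\}\in E$. Each $\omega_{ij}$ is the pullback of the standard generator $\eta_{ij}\in H^1(\mathbb{C}^*;\mathbb{Q})$ along the morphism $\ell\colon M(H)\to(\mathbb{C}^*)^m$ whose $\{i,j\}$--coordinate is $x_i-x_j$. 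Since $H^1((\mathbb{C}^*)^m;\mathbb{Q})\cong\mathbb{Q}(-1)^{\oplus m}$ is pure Hodge--Tate of type $(1,1)$ and $\ell^*$ is a morphism of mixed Hodge structures, $H^1(M(H);\mathbb{Q})$ is a quotient (in fact here an isomorphic image) of $\mathbb{Q}(-1)^{\oplus m}$, hence pure of type $(1,1)$. Because cup products are morphisms of mixed Hodge structures and, by Brieskorn, $H^k(M(H);\mathbb{Q})$ is a quotient of $\Lambda^k H^1(M(H);\mathbb{Q})$, it follows that $H^k(M(H);\mathbb{Q})$ is a quotient of a structure of type $(k,k)$, and therefore is itself pure Hodge--Tate of type $(k,k)$, i.e.\ $H^k(M(H);\mathbb{Q})\cong\mathbb{Q}(-k)^{\oplus b_k(H)}$. (An alternative route would run the weight spectral sequence of the De Concini--Procesi wonderful compactification, whose closed strata are again arrangement-type varieties with Hodge--Tate cohomology, together with the arrangement-specific vanishing forcing $H^k$ to be pure of weight $2k$.)

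With purity established, I would compute $E$ as follows. Being smooth affine of dimension $n$, $M(H)$ has the homotopy type of a CW complex of dimension $\le n$ (Andreotti--Frankel / Artin vanishing), so $H^j(M(H);\mathbb{Q})=0$ for $j>n$. Poincar\'e duality for the smooth variety $M(H)$ gives isomorphisms of mixed Hodge structures $H_c^k(M(H);\mathbb{Q})\cong H^{2n-k}(M(H);\mathbb{Q})^\vee(-n)$; combined with the purity step, $H_c^k(M(H);\mathbb{Q})=0$ for $k<n$, and for $n\le k\le 2n$ it is pure Hodge--Tate of type $(k-n,k-n)$ with $(k-n,k-n)$--Hodge number $b_{2n-k}(H)$. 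Substituting into the definition of the Hodge--Deligne polynomial and reindexing by $j=2n-k$ yields
\[
E(M(H);u,v)=\sum_{k=n}^{2n}(-1)^k\,b_{2n-k}(H)\,(uv)^{\,k-n}=\sum_{j=0}^{n}(-1)^{j}\,b_j(H)\,(uv)^{\,n-j},
\]
and recognizing the right-hand side as $(uv)^n\sum_{j}b_j(H)\bigl(-\tfrac1{uv}\bigr)^{j}=(uv)^n P_{M(H)}\!\bigl(-\tfrac1{uv}\bigr)$ closes the argument.

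The main obstacle is the purity claim itself: one must be careful that Brieskorn's generators are genuinely $(1,1)$--classes in the mixed Hodge-theoretic sense and that the algebra generation proceeds through morphisms of mixed Hodge structures, so that passing to quotients preserves the Hodge--Tate type. Once that is granted, Artin vanishing, Poincar\'e duality, and the reindexing are routine.
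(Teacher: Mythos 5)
Your argument is correct. Note that the paper itself gives no proof of Proposition~\ref{prop:purity-graphic}: it is stated as a known fact from the arrangement literature (Orlik--Solomon, Orlik--Terao, Dimca), so there is no in-paper proof to compare against. What you supply is exactly the standard argument: Brieskorn's theorem gives generation of $H^*(M(H);\mathbb{Q})$ in degree one by the logarithmic classes, each of which is pulled back from $H^1(\mathbb{C}^*;\mathbb{Q})\cong\mathbb{Q}(-1)$ along $x_i-x_j$ and hence is a genuine $(1,1)$--class; functoriality and compatibility of cup product with mixed Hodge structures then force $H^k$ to be a quotient of a type--$(k,k)$ structure, hence $\mathbb{Q}(-k)^{\oplus b_k}$. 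Your passage to the $E$--polynomial via Andreotti--Frankel vanishing and Poincar\'e duality $H_c^k(M)\cong H^{2n-k}(M)^\vee(-n)$, followed by the reindexing $j=2n-k$, correctly reproduces the stated identity $E(M(H);u,v)=\sum_{k}(-1)^k b_k(H)(uv)^{n-k}=(uv)^nP_{M(H)}(-1/(uv))$, and you rightly flag the only delicate point (that the degree-one generators are honestly Tate classes and that generation happens through MHS morphisms). The only cosmetic remark is that the word ``equivalently'' in the statement is looser than your proof: purity of each $H^k$ implies the $E$--polynomial formula, but the formula alone (a statement about an alternating sum) would not by itself recover purity degree by degree; your write-up proves the stronger purity claim first and derives the formula from it, which is the right order.
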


This fact allows us to work on the $uv$–diagonal and identify the $E$–polynomial 
directly from the chromatic polynomial.

\subsection{Chromatic polynomial $\Rightarrow$ Poincaré/$E$–polynomial}

We introduce two basic concepts.\footnote{Readers new to arrangement theory may focus only on the conclusion: 
the Poincar\'e polynomial of the graphic arrangement complement can be 
computed directly from the chromatic polynomial of the graph.}

\begin{definition}[Chromatic polynomial]
For a graph $H$, the chromatic polynomial $\chi_H(q)$ is the polynomial that counts the number of proper vertex colorings of $H$ using $q$ colors.
\end{definition}

\begin{definition}[Poincaré polynomial]
For a topological space $X$, the Poincaré polynomial is
\[
P_X(t) := \sum_{k \ge 0} b_k(X)\,t^k, \qquad b_k(X)=\mathrm{rank}\,H^k(X;\mathbb{Q}).
\]
\end{definition}

For a graph $H$, the chromatic polynomial $\chi_H(q)$ not only encodes the
count of proper $q$--colorings but also determines the topology of the graphic
arrangement complement.  In particular, the Orlik--Solomon formula expresses the
Poincaré polynomial of $M(H)$ in terms of $\chi_H$. 
In the Hodge--Tate setting of graphic arrangements this further simplifies 
to a direct identity: the Hodge--Deligne $E$--polynomial is obtained by the 
specialization $E(M(H);u,v)=\chi_H(uv).$
Thus chromatic polynomials provide complete access to the Betti numbers (and
hence cohomology) of graphic arrangement complements.

\begin{proposition}[Chromatic $\Rightarrow$ Poincaré/E; graphic case]
\label{prop:chromatic-to-poincare}
If $n=|V|$, then
\[
P_{M(H)}(t) \;=\; (-t)^{n}\,\chi_H\!\left(-\tfrac{1}{t}\right),
\qquad 
E(M(H);u,v) \;=\; \chi_H(uv).
\]
Thus the Betti numbers of $M(H)$ are determined by the coefficients of $\chi_H$.
For example, for the cycle $C_n$, 
\(
\chi_{C_n}(q)=(q-1)^n+(-1)^n(q-1),
\)
so $E(M(C_n);u,v)$ depends explicitly on $n$. 
\end{proposition}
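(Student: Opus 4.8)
The plan is to deduce both identities from two classical inputs --- the Orlik--Solomon description of the cohomology of a hyperplane-arrangement complement, and the identification of the characteristic polynomial of a graphic arrangement with the chromatic polynomial --- and then to pass from $P_{M(H)}$ to $E(M(H))$ via the Hodge--Tate purity of Proposition~\ref{prop:purity-graphic}. First I would recall the Orlik--Solomon theorem: for any complex hyperplane arrangement $\mathcal{A}$ in $\mathbb{C}^n$ the cohomology ring $H^{*}(M(\mathcal{A});\mathbb{Q})$ is torsion-free and
\[
P_{M(\mathcal{A})}(t)=\sum_{X\in L(\mathcal{A})}|\mu(\hat 0,X)|\,t^{\,\mathrm{codim}\,X}=(-t)^n\,\chi(\mathcal{A};-1/t),
\]
where $L(\mathcal{A})$ is the intersection lattice, $\mu$ its M\"obius function, and $\chi(\mathcal{A};q)=\sum_{X\in L(\mathcal{A})}\mu(\hat 0,X)\,q^{\dim X}$ the characteristic polynomial. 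Next I would identify, for $\mathcal{A}(H)$, the intersection lattice with the bond lattice of $H$: a flat $X$ is determined by the partition of $V$ into the connected components of a spanning subgraph $(V,S)$ with $S\subseteq E$, and then $\dim X=c(V,S)$, the number of those components. Whitney's rank formula gives $\chi_H(q)=\sum_{S\subseteq E}(-1)^{|S|}q^{\,c(V,S)}$, and grouping this sum by the closure of $S$, together with the standard identity $\sum_{\mathrm{cl}(S)=X}(-1)^{|S|}=\mu(\hat 0,X)$, yields $\chi(\mathcal{A}(H);q)=\chi_H(q)$ on the nose (the factor $q^{\,c(V,E)}$ inside $\chi_H$ matching the flat $\mathbb{C}^{\,c(V,E)}\subseteq M(H)$). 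Substituting into the Orlik--Solomon identity gives $P_{M(H)}(t)=(-t)^n\chi_H(-1/t)$, the first claim.

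For the second identity I would invoke Proposition~\ref{prop:purity-graphic}, which already gives $E(M(H);u,v)=(uv)^n\,P_{M(H)}\!\bigl(-\tfrac{1}{uv}\bigr)$; feeding in the first claim with $t=-1/(uv)$ collapses the right-hand side to $\chi_H(uv)$. The cycle example is then immediate: inserting the standard chromatic polynomial $\chi_{C_n}(q)=(q-1)^n+(-1)^n(q-1)$ --- itself obtained from a single deletion--contraction --- into $E(M(C_n);u,v)=\chi_{C_n}(uv)$ exhibits the explicit dependence on $n$.

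The step I expect to require the most care is the lattice bookkeeping: one must line up the ambient dimension $n$, the essential rank $n-c(V,E)$, and the power $q^{\,c(V,E)}$ so that $\chi(\mathcal{A}(H);q)$ equals $\chi_H(q)$ exactly and not an essentialized version differing by a power of $q$, and this is precisely where degree and sign conventions tend to slip. A self-contained alternative that sidesteps Orlik--Solomon is to prove $E(M(H);u,v)=\chi_H(uv)$ directly by induction on $|E|$ from the scissor relation $[M(H\setminus e)]=[M(H)]+[M(H/e)]$ in the Grothendieck ring of varieties --- valid since, for $e=\{i,j\}$, one has $M(H\setminus e)\setminus M(H)=\{x_i=x_j\}\cap M(H\setminus e)\cong M(H/e)$ --- combined with the additivity of the $E$--polynomial and the base case of an edgeless graph, after which the recurrence mirrors $\chi_H=\chi_{H\setminus e}-\chi_{H/e}$ and the Poincar\'e identity follows by running the purity computation in reverse; the only extra care there concerns parallel edges produced by contraction and the (automatic) absence of loops when contracting an edge of a simple graph.
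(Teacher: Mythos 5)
Your proof is correct and follows essentially the same route as the paper: the Orlik--Solomon formula for $P_{M(H)}$, the identification of the arrangement's characteristic polynomial with $\chi_H$ (which you derive via the bond lattice and Whitney's rank formula rather than citing, in the unessentialized normalization with ambient exponent $n$ --- which also handles disconnected $H$ uniformly, where the paper restricts to connected $H$ and uses rank $n-1$ with $\chi_{\mathcal{A}(H)}(q)=\chi_H(q)/q$), and Hodge--Tate purity for the passage to $E(M(H);u,v)=\chi_H(uv)$. One trivial slip that plays no role in the argument: the maximal flat $\mathbb{C}^{\,c(V,E)}$ is the center of the arrangement and lies in the \emph{union} of the hyperplanes, not in $M(H)$.
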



For connected $H$, the arrangement $\mathcal{A}(H)\subset\mathbb{C}^n$ has rank $n-1$.
Its characteristic polynomial satisfies
\[
\chi_{\mathcal{A}(H)}(q)=\frac{\chi_H(q)}{q}
\quad\text{\cite{OrlikSolomon1980,OrlikTerao1992}}.
\]
The Orlik--Solomon formula gives 
$P_{M(H)}(t)=(-t)^{\,n-1}\,\chi_{\mathcal{A}(H)}(-1/t),$ 
hence
$P_{M(H)}(t)=(-t)^n\,\chi_H\!\left(-\tfrac{1}{t}\right).$
Thus the exponent $n$ comes from the ambient space $\mathbb{C}^n$, 
even though the essential rank is $n-1$.

\subsection{Deletion–contraction jump identity (graphic specialization)}

When a single edge $e$ is added to $H$ to form $H' := H \cup \{e\}$, 
the change in the cohomology of arrangement complements admits a 
deletion–contraction description.\footnote{Readers not familiar with Hodge theory may skip to Theorem~\ref{thm:delcon-jump}; in our graph setting, it simply means that adding an edge changes the invariant in a controlled way, described by a contraction minor.} 
We state it in compactly supported cohomology and in the Grothendieck group of 
mixed Hodge structures; pushing along the Hodge–Deligne map yields the corresponding 
statement for $E$–polynomials in our Hodge–Tate setting.

\begin{theorem}[Deletion–contraction jump identity]
\label{thm:delcon-jump}
Let $H$ be a (simple) graph, let $H' := H \cup \{e\}$ be obtained by adding one edge $e$, and let $H/e$ denote the contraction.
There is a natural long exact sequence in compactly supported cohomology
\[
\cdots \to H_c^k(M(H')) \to H_c^k(M(H)) \to H_c^{k-1}(M(H/e)) \to H_c^{k+1}(M(H')) \to \cdots
\]
compatible with mixed Hodge structures. Consequently, in the Grothendieck group $K_0(\mathrm{MHS})$,
\[
\sum_{k\in\mathbb{Z}} (-1)^k [H_c^k(M(H))]
\;-\; \sum_{k\in\mathbb{Z}} (-1)^k [H_c^k(M(H'))]
\;=\; \sum_{k\in\mathbb{Z}} (-1)^k [H_c^{k-1}(M(H/e))].
\]
Applying the Hodge–Deligne morphism $E:K_0(\mathrm{MHS})\to\mathbb{Z}[u,v]$ yields
\[
E(M(H);u,v) \;-\; E(M(H');u,v) \;=\; E(M(H/e);u,v).
\]
Hence along a threshold chain $H_{j-1}\xrightarrow{+e_j} H_j$, the per–step jump depends only on the contraction $H_j/e_j$.
\end{theorem}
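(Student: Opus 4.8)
\medskip
The plan is to realize the long exact sequence as the localization sequence in compactly supported cohomology attached to a single open--closed decomposition of $M(H)$, to observe that this decomposition is algebraic and hence Hodge-theoretic, and then to obtain the Grothendieck-group and $E$-polynomial identities by taking Euler characteristics. Write $e=\{a,b\}$ for the added edge, which along a threshold chain is not already present in $H$, and set $H_{ab}=\{x_a=x_b\}\subset\mathbb{C}^{\,n}$. Passing from $H$ to $H'=H\cup\{e\}$ deletes exactly the one further hyperplane $H_{ab}$, so $M(H')=M(H)\setminus H_{ab}$ is Zariski-open in $M(H)$ with closed complement $Z:=M(H)\cap H_{ab}$. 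The geometric heart of the argument is the identification $Z\cong M(H/e)$: restricting coordinates to $H_{ab}\cong\mathbb{C}^{\,n-1}$, with coordinates indexed by the vertex set $V/(a\sim b)$ of the contraction, carries each defining hyperplane $\{x_i=x_j\}$ with $\{i,j\}\in E(H)$ to the hyperplane of the corresponding edge of $H/e$, and every hyperplane of $H/e$ arises this way; deleting all of them from $H_{ab}$ therefore leaves precisely $M(H/e)$. I would isolate this as a short lemma, checking that it is an isomorphism of complex algebraic varieties and that the combinatorial bookkeeping is harmless: parallel edges created by the contraction give repeated hyperplanes, which do not change the complement, and no loop is created since $e$ itself is contracted away and $H$ is simple.

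Granting this, the long exact sequence is produced from the short exact sequence of sheaves on $M(H)$,
\[
0 \;\longrightarrow\; j_!\,\underline{\mathbb{Q}}_{M(H')} \;\longrightarrow\; \underline{\mathbb{Q}}_{M(H)} \;\longrightarrow\; i_*\,\underline{\mathbb{Q}}_{Z} \;\longrightarrow\; 0,
\]
where $j\colon M(H')\hookrightarrow M(H)$ is the open embedding and $i\colon Z\hookrightarrow M(H)$ the closed one, by passing to compactly supported hypercohomology and substituting $Z\cong M(H/e)$. This sheaf sequence, and therefore the resulting localization sequence, is functorial in the pair $(H,e)$, which is the sense in which the long exact sequence is \emph{natural}. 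That every term carries a mixed Hodge structure and every map in the sequence is a morphism of $\mathrm{MHS}$ is Deligne's functoriality for compactly supported cohomology of complex algebraic varieties---equivalently, it follows by applying the six-functor formalism of mixed Hodge modules to the triangle above; see \cite{OrlikTerao1992,Dimca2017}.

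Finally, a long exact sequence in $K_0(\mathrm{MHS})$ has vanishing alternating sum of classes, so, writing $[X]_c:=\sum_{k}(-1)^k[H_c^k(X)]$, the sequence yields $[M(H)]_c-[M(H')]_c=[M(H/e)]_c$ in $K_0(\mathrm{MHS})$; applying the ring morphism $E\colon K_0(\mathrm{MHS})\to\mathbb{Z}[u,v]$, which by definition sends $[X]_c$ to $E(M(X);u,v)$, gives $E(M(H);u,v)-E(M(H');u,v)=E(M(H/e);u,v)$. Specializing to $H=H_{j-1}$, $H'=H_j$, $e=e_j$ along the threshold chain then shows the per-step jump equals $E(M(H_j/e_j);u,v)$, as claimed. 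As an independent check in the Hodge--Tate/graphic setting, this same $E$-polynomial identity also follows directly from Proposition~\ref{prop:chromatic-to-poincare} together with the classical chromatic deletion--contraction recurrence $\chi_{H}(q)=\chi_{H'}(q)+\chi_{H/e}(q)$, since then $\chi_H(uv)-\chi_{H'}(uv)=\chi_{H/e}(uv)$.

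The one step requiring genuine care is the geometric identification $M(H)\cap\{x_a=x_b\}\cong M(H/e)$: making it precise as an isomorphism of varieties, handling the parallel-edge and no-loop bookkeeping under contraction, and thereby ensuring automatically that it respects mixed Hodge structures. Everything downstream---the sheaf triangle, its upgrade to a sequence of mixed Hodge structures, and the Euler-characteristic accounting---is standard; the residual subtlety there is simply to invoke the correct $\mathrm{MHS}$-compatibility statement and to be consistent about cohomological indexing and Tate twists in the localization sequence.
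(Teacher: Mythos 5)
Your proposal is correct and is essentially the argument the paper has in mind: the paper offers no proof beyond citing the classical deletion--restriction sequence for arrangement complements, and you supply exactly the missing content --- the open--closed decomposition $M(H')\stackrel{j}{\hookrightarrow}M(H)\stackrel{i}{\hookleftarrow}Z$ with $Z=M(H)\cap\{x_a=x_b\}\cong M(H/e)$, the triangle $j_!\underline{\mathbb{Q}}\to\underline{\mathbb{Q}}\to i_*\underline{\mathbb{Q}}$, Deligne functoriality for the MHS compatibility, and the Euler-characteristic bookkeeping, with the parallel-edge/no-loop checks for the contraction handled correctly. One point deserves emphasis: the localization sequence you derive has $H_c^k(M(H/e))$ in the third slot with \emph{no} degree shift, whereas the theorem as stated writes $H_c^{k-1}(M(H/e))$; taken literally, the shifted version would flip the sign of the alternating sum and contradict the theorem's own final identity $E(M(H))-E(M(H'))=E(M(H/e))$ (equivalently the scissor relation $[M(H)]=[M(H')]+[Z]$ and the chromatic recurrence). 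The $k-1$ shift belongs to the Poincar\'e-dual (Gysin/residue) formulation in ordinary cohomology, not to the compactly supported localization sequence, so your caveat about ``being consistent about cohomological indexing and Tate twists'' is not merely a formality --- your unshifted sequence is the one that makes the statement's conclusion come out right, and you should say so explicitly rather than leave it as a residual subtlety.
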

This is a direct specialization of the classical \emph{deletion–contraction exact sequence} for hyperplane arrangement complements (see \cite{OrlikSolomon1980,OrlikTerao1992,Dimca2017}).



\begin{lemma}[Component multiplicativity]
If $H=\bigsqcup_\ell H^{(\ell)}$ is a disjoint union, then 
$E(M(H);u,v)=\prod_\ell E(M(H^{(\ell)});u,v)$. 
\end{lemma}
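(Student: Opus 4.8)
The plan is to give the quickest argument via the chromatic identity already established, and then record the geometric reason behind it. By induction on the number of components it suffices to treat the case $H = H^{(1)} \sqcup H^{(2)}$, with vertex sets $V_1, V_2$ and edge sets $E_1, E_2$.

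First I would invoke Prop.~\ref{prop:chromatic-to-poincare}, which gives $E(M(H);u,v) = \chi_H(uv)$ for every graph $H$. The chromatic polynomial is multiplicative over disjoint unions: a proper $q$–coloring of $H$ is precisely an independent pair consisting of a proper $q$–coloring of $H^{(1)}$ and a proper $q$–coloring of $H^{(2)}$, so $\chi_H(q) = \chi_{H^{(1)}}(q)\,\chi_{H^{(2)}}(q)$ as polynomials in $q$. Substituting $q = uv$ and applying Prop.~\ref{prop:chromatic-to-poincare} to $H$ and to each component yields
\[
E(M(H);u,v) = \chi_{H^{(1)}}(uv)\,\chi_{H^{(2)}}(uv) = E(M(H^{(1)});u,v)\,E(M(H^{(2)});u,v),
\]
and the general case follows by iterating over the components.

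It is worth recording the geometric source of this identity, which also makes clear that it is the same multiplicativity enjoyed by arrangement complements in general. Writing $\mathbb{C}^{V} = \mathbb{C}^{V_1} \times \mathbb{C}^{V_2}$, every hyperplane $\{x_i - x_j = 0\}$ of $\mathcal{A}(H)$ has both endpoints in a single $V_\ell$, so each such hyperplane is a cylinder over a hyperplane of $\mathcal{A}(H^{(\ell)})$; taking complements exhibits $M(H)$ as a product of varieties $M(H) \cong M(H^{(1)}) \times M(H^{(2)})$. The Künneth isomorphism in compactly supported cohomology, which is compatible with mixed Hodge structures (and is purely Hodge--Tate here by Prop.~\ref{prop:purity-graphic}), identifies the class of $M(H)$ in $K_0(\mathrm{MHS})$ with the product of the classes of $M(H^{(1)})$ and $M(H^{(2)})$, and the claim then follows since $E$ is a ring homomorphism. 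The only mildly delicate point in this route is the Hodge-theoretic compatibility of Künneth — needed so that the identity holds in $K_0(\mathrm{MHS})$ and not merely at the level of Betti numbers — but this is standard, and the chromatic argument above bypasses it entirely; I would therefore present the chromatic proof as the main one and this paragraph as an explanatory remark.
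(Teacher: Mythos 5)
Your proposal is correct, but your main argument takes a different route from the paper. The paper's entire proof is the one-line geometric observation you relegate to your closing remark: $M(H)\cong\prod_\ell M(H^{(\ell)})$ (since every hyperplane $x_i-x_j=0$ of $\mathcal{A}(H)$ lives in a single factor of $\mathbb{C}^{V}=\prod_\ell\mathbb{C}^{V_\ell}$) together with multiplicativity of $E$ on products. Your primary argument instead routes through Prop.~\ref{prop:chromatic-to-poincare} and the multiplicativity of the chromatic polynomial over disjoint unions, which is also valid: $\chi_H(q)=\prod_\ell\chi_{H^{(\ell)}}(q)$ and the substitution $q=uv$ carries no dimension-dependent prefactor, so the identity follows immediately (and the paper does apply the proposition to disconnected graphs, e.g.\ the edgeless base case, so there is no hidden connectedness hypothesis to worry about). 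The trade-off is that your route is purely combinatorial but depends on the chromatic identity already being in place, whereas the paper's route is intrinsic to the arrangement geometry, works for arbitrary (not just graphic) product arrangements, and avoids any appeal to colorings; the Künneth compatibility with mixed Hodge structures that you flag as the delicate point is indeed standard and is implicitly what the paper means by ``multiplicativity of $E$.'' Since you include both arguments, nothing is missing; you have simply inverted which one is primary.
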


\begin{proof}
This follows immediately from $M(H)\cong \prod_\ell M(H^{(\ell)})$ and multiplicativity of $E$.    
\end{proof}


\section{Chromatic Persistence Algorithm (CPA) on Graphs}
\label{sec:cpa}

This section gives the concrete event–driven procedure (\emph{Algorithm CPA}) for weighted graphs, together with a correctness proof. We rely on two ingredients from \S\ref{sec:prelim}: the chromatic$\to$Poincaré identity (Prop.~\ref{prop:chromatic-to-poincare}) and the deletion–contraction jump identity (Thm.~\ref{thm:delcon-jump}).

At each threshold step $j$, the algorithm maintains the per–threshold $E$–polynomial 
$E_j$ and the jump class $\Delta_j$ (as prescribed by Theorem~\ref{thm:delcon-jump}). 
The complexity of each step is dominated by evaluating the contracted–minor 
chromatic polynomial $\chi_{H_j/e_j}$; summing over all thresholds yields the 
worst–case and fixed–parameter bounds shown later.

Let $\Delta_j\in K_0(\mathrm{MHS})$ be the jump at step $j$. The \emph{barcode
zeta} is the finite Euler product
$Z_G(T)\;:=\;\prod_{j=1}^m (1-T^j)^{-\Delta_j}\;\in\;1+T\,K_0(\mathrm{MHS})[[T]],$ 
formed using the standard power structure over $K_0(\mathrm{MHS})$; pushing
along the Hodge–Deligne map $E:K_0(\mathrm{MHS})\to\mathbb{Z}[u,v]$ yields a
compact numerical summary with multiplicative pooling \cite{GuseinZade2004}.

Algorithm~\ref{alg:cpa} (CPA) is given below, based on the identities recalled in \S\ref{sec:prelim}.
\begin{algorithm}[!h]
  \caption{Algorithm CPA: Event–Driven Motivic Persistence on Graphs}
  \label{alg:cpa}
  \begin{algorithmic}[1]
    \Require Weighted graph $G=(V,E,w)$ with thresholds $t_1<\cdots<t_m$
    \Ensure Per–threshold $E$–polynomials $E_j(u,v)$; jump classes $\Delta_j$; barcode zeta $Z_G(T)$
    \State $Z_G(T)\gets 1$ \Comment{Initialization}
    \State $E_0(u,v)\gets (uv)^{|V|}$ \Comment{Edgeless graph on $|V|$ vertices}
    \For{$j=1$ to $m$} \Comment{Process only at events}
      \State $H_j\gets G_{\le t_j}$; let $e_j$ be the edge added at step $j$
      \State \textbf{Chromatic step:} compute $\chi_{H_j/e_j}(q)$ \Comment{Deletion–contraction or DP on treewidth}
      \State \textbf{Jump step:} $\Delta_j(u,v)\gets \chi_{H_j/e_j}(uv)$
      \State \textbf{Betti/$E$ update:} $E_j(u,v)\gets E_{j-1}(u,v)-\Delta_j(u,v)$
      \State \textbf{Zeta:} $Z_G(T)\gets Z_G(T)\cdot (1-T^j)^{-\Delta_j}$
    \EndFor
    \State \Return $\{E_j(u,v)\}_{j=1}^m$, $\{\Delta_j\}_{j=1}^m$, and $Z_G(T)$
  \end{algorithmic}
\end{algorithm}

\begin{theorem}[Correctness]
For each threshold $t_j$, Algorithm~\ref{alg:cpa} returns
\[
E_j(u,v)=E\!\left(M(H_j);u,v\right)
\quad\text{and}\quad
\Delta_j(u,v)=E\!\left(M(H_j/e_j);u,v\right),
\]
so the per–event jump equals the class prescribed by the
deletion–contraction jump identity.
\end{theorem}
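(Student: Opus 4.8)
The plan is to prove both equalities simultaneously by induction on the threshold index $j$, using the two black boxes of \S\ref{sec:prelim} --- the chromatic-to-$E$ identity (Proposition~\ref{prop:chromatic-to-poincare}) and the deletion--contraction jump identity (Theorem~\ref{thm:delcon-jump}) --- as the only substantive inputs. The inductive hypothesis to carry is ``$E_{j-1}(u,v)=E(M(H_{j-1});u,v)$'', and at stage $j$ one first establishes the $\Delta_j$ equality and then uses it, together with Theorem~\ref{thm:delcon-jump}, to push the hypothesis forward to $E_j$. For the base case $j=0$, the graph $H_0$ is edgeless on $|V|$ vertices, so $\mathcal{A}(H_0)=\varnothing$, $M(H_0)=\mathbb{C}^{|V|}$, and hence $E(M(H_0);u,v)=(uv)^{|V|}$ (from $E(\mathbb{C};u,v)=uv$ and multiplicativity of $E$, equivalently from $\chi_{H_0}(q)=q^{|V|}$ via Proposition~\ref{prop:chromatic-to-poincare}); this is exactly the initialization of $E_0$ in Algorithm~\ref{alg:cpa}.

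For the inductive step, fix $j\ge 1$ and assume $E_{j-1}(u,v)=E(M(H_{j-1});u,v)$. I would first check that the graph $H_j/e_j$ appearing in the ``chromatic step'' --- the simple graph obtained by contracting the freshly inserted edge $e_j$ inside $H_j=H_{j-1}\cup\{e_j\}$ and discarding the loop and any parallel edges so created --- is literally the contraction $H/e$ of Theorem~\ref{thm:delcon-jump} instantiated at $H=H_{j-1}$, $e=e_j$, $H'=H_j$; since proper colorings are insensitive to edge multiplicities, the chromatic polynomial is unambiguous on this simple quotient. Proposition~\ref{prop:chromatic-to-poincare} applied to $H_j/e_j$ then gives $\chi_{H_j/e_j}(uv)=E(M(H_j/e_j);u,v)$, which is precisely the value the algorithm assigns to $\Delta_j(u,v)$; this is the second asserted equality. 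Finally, Theorem~\ref{thm:delcon-jump} yields $E(M(H_{j-1});u,v)-E(M(H_j);u,v)=E(M(H_j/e_j);u,v)$, so
\[
E_j(u,v)=E_{j-1}(u,v)-\Delta_j(u,v)=E(M(H_{j-1});u,v)-E(M(H_j/e_j);u,v)=E(M(H_j);u,v),
\]
closing the induction. The concluding sentence of the theorem --- that the per-event jump coincides with the class prescribed by the deletion--contraction identity --- is then immediate from the rearrangement just used, and the barcode-zeta output needs no separate argument, being defined verbatim from the now-validated classes $\Delta_j$.

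There is no deep obstacle: essentially all the mathematical force sits in Proposition~\ref{prop:chromatic-to-poincare} and Theorem~\ref{thm:delcon-jump}. The only place I expect to need genuine care --- and where a hasty write-up could slip --- is the bookkeeping around contraction: confirming that $H_j/e_j$ really is the $H/e$ of Theorem~\ref{thm:delcon-jump} (its vertex count is $|V|-1$, so one must not conflate $H_{j-1}/e_j$ with $H_{j-1}$), and that passing to the simple quotient leaves $\chi$ unchanged. A secondary, cosmetic check is that all $E$-polynomials in play genuinely depend only on the product $uv$, i.e.\ that we remain in the Hodge--Tate regime of Proposition~\ref{prop:purity-graphic}, so that the specialization $E(M(H);u,v)=\chi_H(uv)$ is legitimate at every graph $H$ encountered (including the contracted minors).
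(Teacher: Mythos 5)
Your proposal is correct and follows essentially the same route as the paper's own proof: induction on $j$, with the base case given by the edgeless graph ($\chi_{H_0}(q)=q^n$, so $E_0=(uv)^n$) and the inductive step combining Proposition~\ref{prop:chromatic-to-poincare} applied to $H_j/e_j$ with the deletion--contraction recurrence of Theorem~\ref{thm:delcon-jump}. Your additional remarks on the simple-quotient bookkeeping for the contraction and on staying in the Hodge--Tate regime are sensible precautions but do not alter the argument.
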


\begin{proof}
We prove by induction on $j$ that $E_j(u,v)=E(M(H_j);u,v)$ and that
$\Delta_j(u,v)=E(M(H_j/e_j);u,v)$.

\emph{Base case $j=0$.} Let $H_0$ be the edgeless graph on $n=|V|$ vertices.
Then $\chi_{H_0}(q)=q^n$, so by Proposition~\ref{prop:chromatic-to-poincare},
$E(M(H_0);u,v)=\chi_{H_0}(uv)=(uv)^n$. The algorithm initializes
$E_0(u,v)=(uv)^n$, hence $E_0=E(M(H_0);u,v)$.

\emph{Induction step.} Assume $E_{j-1}(u,v)=E(M(H_{j-1});u,v)$.
At step $j$ the algorithm computes $\chi_{H_j/e_j}(q)$, and by
Proposition~\ref{prop:chromatic-to-poincare} we have
\[
E(M(H_j/e_j);u,v)=\chi_{H_j/e_j}(uv).
\]
Here is where Theorem~\ref{thm:delcon-jump} enters: it provides the exact
recurrence
\[
E(M(H_{j-1});u,v)-E(M(H_j);u,v)=E(M(H_j/e_j);u,v).
\]
Substituting the inductive hypothesis $E_{j-1}=E(M(H_{j-1}))$ into this
identity shows that the update rule
\[
E_j(u,v)\;\gets\;E_{j-1}(u,v)-\chi_{H_j/e_j}(uv)
\]
produces $E_j(u,v)=E(M(H_j);u,v)$.
Finally, by definition the jump is $\Delta_j(u,v)=\chi_{H_j/e_j}(uv)$,
so $\Delta_j=E(M(H_j/e_j);u,v)$.

Thus, by induction, both the per–threshold invariants $E_j$ and the per–event
jumps $\Delta_j$ are computed correctly for all $j$.
\end{proof}


\section{Complexity Analysis}
\label{sec:complexity}

We analyze the end–to–end cost of Algorithm~\ref{alg:cpa}. Throughout, $n:=|V|$
and $m:=|E|$. Because edge weights are assumed distinct, the threshold chain has
exactly $m$ steps. We use $O^\ast(\cdot)$ to suppress factors polynomial in $n$
and~$m$. 
The dominant cost is computing the contracted–minor chromatic polynomial 
$\chi_{H_j/e_j}$ along the threshold chain; motivic updates are linear in the 
number of thresholds and cohomological degrees.
Computing $\chi_H$ is \#P–hard in the worst case, but admits dynamic programs fixed–parameter
tractable (FPT) in treewidth and closed–form recurrences for standard graph families,
yielding near–linear behavior on those classes.

Let $T_\chi(H)$ and $S_\chi(H)$ denote the time and space to compute $\chi_H$.
We will use the following standard bounds.
\begin{itemize}
  \item \textbf{Deletion–contraction.} The classical recursion
  $\chi_H(q)=\chi_{H\setminus e}(q)-\chi_{H/e}(q)$ with memoization yields
  $T_\chi(H)=O^\ast\!\big(2^{|E(H)|}\big)$ and $S_\chi(H)=O^\ast\!\big(2^{|E(H)|}\big)$.
  \item \textbf{Treewidth–DP.} If $\mathrm{tw}(H)\le w$, a dynamic program on a
  tree decomposition computes $\chi_H$ in $f(w)\,\mathrm{poly}(n)$ time and
  $g(w)\,\mathrm{poly}(n)$ space (for some computable functions $f,g$) \cite{Noble1998,Cygan2015}.
  \item \textbf{Motivic/$E$–updates.} Converting $\chi_H$ to $P_{M(H)}$ and
  $E(M(H);u,v)$ and assembling the jump/zeta costs $O(d_E)$ ring operations per
  step, with $d_E\le n$ the number of nonzero degrees.
\end{itemize}

\begin{theorem}[Complexity]\label{thm:complexity}
Let $H_j:=G_{\le t_j}$ be the threshold chain and $e_j$ the edge added at
step~$j$. Then:
\begin{enumerate}
  \item \textbf{Worst case (deletion–contraction).}
  \[
    T_{\mathrm{total}}
      \;=\;
      O^\ast\!\Big(\sum_{j=1}^{m} 2^{\,j}\Big)
      \;=\; O^\ast(2^m),
      \qquad
    S_{\mathrm{total}} \;=\; O^\ast(2^m).
  \]
  \item \textbf{FPT in treewidth.} If each $H_j$ has $\mathrm{tw}(H_j)\le w$, then
  \[
    T_{\mathrm{total}} \;=\; O\!\big(m\,f(w)\,\mathrm{poly}(n)\big),
    \qquad
    S_{\mathrm{total}} \;=\; O\!\big(g(w)\,\mathrm{poly}(n)\big).
  \]
  \item \textbf{Special classes.} 
  For trees and cycles, explicit closed forms of $\chi$ imply $T_{\mathrm{total}}=O(m+n)$. 
  For series–parallel graphs, linear-time decomposition recurrences yield the same bound.
\end{enumerate}
Across all cases, the motivic updates ($E$–assembly and the zeta product) add
$O(m\cdot d_E)$ ring operations.
\end{theorem}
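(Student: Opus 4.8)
The plan is to account for the work of one iteration of Algorithm~\ref{alg:cpa} and sum over the $m$ threshold steps. The three regimes of the theorem differ only in how line~5 --- the computation of the contracted-minor chromatic polynomial $\chi_{H_j/e_j}$ --- is carried out, because every other line of the loop is cheap: at step $j$, setting $\Delta_j(u,v)=\chi_{H_j/e_j}(uv)$ and the subtraction $E_j=E_{j-1}-\Delta_j$ each touch $O(d_E)$ coefficients with $d_E\le n$, while the zeta step merely appends the factor $(1-T^j)^{-\Delta_j}$ to the Euler product $Z_G(T)$, recording the $O(d_E)$ coefficients of $\Delta_j$ via the power structure on $K_0(\mathrm{MHS})$. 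Thus the motivic bookkeeping contributes $O(m\cdot d_E)$ ring operations in total --- the final assertion of the theorem --- and it remains to bound $\sum_{j=1}^{m}T_\chi(H_j/e_j)$, and the peak of $S_\chi$, under each hypothesis.

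For part~(1), run the deletion-contraction recursion $\chi_H=\chi_{H\setminus e}-\chi_{H/e}$ with memoization on $H_j/e_j$. Since $H_j$ has exactly $j$ edges, the simple graph underlying $H_j/e_j$ --- obtained by collapsing parallel edges, which does not change proper colorings --- has at most $j-1$ edges, so the standard analysis gives $T_\chi(H_j/e_j)=O^\ast(2^{\,|E(H_j/e_j)|})=O^\ast(2^{\,j})$, and the memo table occupies $S_\chi(H_j/e_j)=O^\ast(2^{\,j})$. Summing the geometric series $\sum_{j=1}^{m}O^\ast(2^{\,j})$ is dominated by its last term, giving $T_{\mathrm{total}}=O^\ast(2^m)$; discarding the memo table between thresholds keeps the peak space at $O^\ast(2^m)$ as well.

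For part~(2), contraction is a minor operation and treewidth is minor-monotone, so $\mathrm{tw}(H_j/e_j)\le\mathrm{tw}(H_j)\le w$. Compute once a width-$w$ tree decomposition of $G$ (e.g.\ by the algorithm of \cite{Bodlaender1996} in $f'(w)\,n$ time); it restricts to a width-$\le w$ decomposition of each $H_j$, and, after merging the contracted vertices in every bag, of each $H_j/e_j$, in $O(n)$ time. The chromatic-polynomial dynamic program on a tree decomposition \cite{Noble1998,Cygan2015} then computes $\chi_{H_j/e_j}$ in $f(w)\,\mathrm{poly}(n)$ time and $g(w)\,\mathrm{poly}(n)$ space; multiplying by the $m$ steps and reusing the workspace yields $T_{\mathrm{total}}=O(m\,f(w)\,\mathrm{poly}(n))$ and $S_{\mathrm{total}}=O(g(w)\,\mathrm{poly}(n))$. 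For part~(3), forests, path-forests (the proper subgraphs of a cycle), and series-parallel graphs are each closed under edge deletion and contraction, so $H_j$ and $H_j/e_j$ remain in the relevant family along the whole chain. A forest on $N$ vertices with $c$ components has $\chi(q)=q^{c}(q-1)^{N-c}$, so maintaining $c$ with a union-find structure evaluates line~5 in $O(1)$ amortized per step after $O(n)$ preprocessing; a cycle occurs only at the last threshold of a weighted $C_n$, with $\chi_{C_n}(q)=(q-1)^n+(-1)^n(q-1)$ (Prop.~\ref{prop:chromatic-to-poincare}) and contraction $C_{n-1}$; and series-parallel graphs have treewidth at most $2$, so part~(2) already gives a polynomial bound which the linear-time SP-tree recurrences for $\chi$ sharpen to $O(m+n)$. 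In each case the chromatic step costs $O(m+n)$ over the whole chain, with the coefficient-level expansion of each closed form absorbed into the $O(m\cdot d_E)$ motivic term.

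The one place that needs genuine checking --- and the main obstacle --- is the interaction of \emph{contraction} with the structural hypotheses: that $\mathrm{tw}(H_j/e_j)\le w$ in part~(2), which is exactly minor-monotonicity of treewidth, and that $H_j/e_j$ stays in the special family in part~(3), which is closure under contraction (immediate for forests and path-forests, and a consequence of minor-closedness --- that is, $K_4$-minor-freeness --- for series-parallel graphs). One also records the elementary point that contracting an edge of a triangle creates parallel edges, which are harmless since deleting a parallel copy leaves $\chi$ unchanged. Granting these, the three bounds follow from the geometric-series estimate of part~(1) together with the cited FPT and closed-form chromatic-polynomial algorithms invoked as black boxes, with the $O(m\cdot d_E)$ motivic overhead added on top.
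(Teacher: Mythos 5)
Your proof is correct and follows essentially the same route as the paper's: the per-step cost is dominated by the chromatic evaluation of the contracted minor, handled by memoized deletion--contraction, treewidth DP, or closed forms in the three regimes, with the geometric sum $\sum_j O^\ast(2^j)=O^\ast(2^m)$ and the $O(m\cdot d_E)$ motivic overhead tallied separately. Your additional care --- noting that contraction can create parallel edges (harmless for $\chi$), treating the pre-final threshold graphs as forests/path-forests with the component-counting closed form rather than as trees/cycles, and restricting a single precomputed tree decomposition to all $H_j/e_j$ --- tightens details the paper glosses over but does not change the argument.
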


\begin{proof}
We use $O^\ast(\cdot)$ to suppress factors polynomial in $n:=|V|$ and $m:=|E|$.
At each threshold $j$ the dominant cost is evaluating $\chi_{H_j/e_j}$; 
all other work is tallied at the end.

\medskip\noindent\emph{(1) Worst case.}
Memoized deletion--contraction yields
$T_\chi(H_j)=O^\ast(2^{|E(H_j)|})$ and $S_\chi(H_j)=O^\ast(2^{|E(H_j)|})$.
The same memoized evaluation covers $\chi_{H_j/e_j}$ as it visits all minors of $H_j$;
alternatively, a separate call obeys $T_\chi(H_j/e_j)\le T_\chi(H_j)$.
Since $\lvert E(H_j)\rvert=j$, one threshold costs $O^\ast(2^j)$ time. 
Summing gives
\[
T_{\mathrm{total}}
  = O^\ast\!\Big(\sum_{j=1}^{m} 2^j\Big)
  = O^\ast(2^m),
\qquad
S_{\mathrm{total}} = O^\ast(2^m).
\]

\medskip\noindent\emph{(2) FPT in treewidth.}
If every $H_j$ has $\mathrm{tw}(H_j)\le w$, then a dynamic program on a 
width-$w$ tree decomposition computes $\chi_{H_j}$ in 
$f(w)\,\mathrm{poly}(n)$ time and $g(w)\,\mathrm{poly}(n)$ space 
\cite{Noble1998,Cygan2015}. Since treewidth is minor-monotone, 
$\chi_{H_j/e_j}$ satisfies the same bounds. 
Across all $m$ thresholds this yields
\[
T_{\mathrm{total}}=O(m\, f(w)\,\mathrm{poly}(n)), \qquad
S_{\mathrm{total}}=O(g(w)\,\mathrm{poly}(n)).
\]

\medskip\noindent\emph{(3) Special classes.}
For trees and cycles the chromatic polynomial has explicit closed forms:
$\chi_T(q)=q(q-1)^{n-1}$ and
$\chi_{C_n}(q)=(q-1)^n+(-1)^n(q-1)$.
Thus the chromatic step at each threshold is $O(1)$, so across the chain the cost
is $O(m)$; adding $O(n)$ initialization yields $O(m+n)$ overall.
For series--parallel graphs, linear-time decomposition recurrences evaluate
$\chi$ along a fixed SP decomposition, with each edge insertion requiring only
constant-time updates. Hence the chromatic stage across the chain is also
$O(m+n)$.

\medskip\noindent\emph{Motivic overhead.}
Given $\chi_H$, forming
$E(M(H);u,v)=\chi_H(uv)$ 
costs $O(d_E)$ ring operations, where $d_E$ is the
top nonzero degree ($d_E\le n$).
Assembling the per–event jump $\Delta_j(u,v)=E(M(H_j/e_j);u,v)$ and
updating the zeta factor $Z_G(T)\gets Z_G(T)\cdot (1-T^j)^{-\Delta_j}$ costs $O(d_E)$ per step.
If one stores $Z_G(T)$ truncated at $T^{m}$, each update is still $O(d_E)$.
Summing across all thresholds gives a motivic cost of $O(m\,d_E)$.
Thus the motivic layer contributes only linear overhead in the
number of thresholds and degrees, which is negligible compared to the cost of
chromatic evaluations established above.
\end{proof}

In practice, consecutive threshold graphs differ by only one edge, so 
memoization across steps reduces constants. Likewise, if $H_j$ decomposes into
components, $\chi_{H_j}$ and $E(M(H_j))$ factorize multiplicatively. 
Both observations shrink runtime in experiments but do not affect 
the asymptotics of Theorem~\ref{thm:complexity}.

We clarify the per–step costs used later in experiments. 
For trees, the chromatic polynomial has a closed form, so each threshold update costs $O(1)$ for the
chromatic step and $O(d_E)$ for the motivic updates, yielding a total cost of $O(m+n)$. 
For cycles, the closed form of $\chi_{C_n}$ similarly makes the chromatic subroutine $O(1)$ per step, with
$O(d_E)$ for the motivic updates. 
For series–parallel graphs, the chromatic polynomial can be evaluated along a decomposition in overall
$O(m+n)$ time across the threshold chain, and the motivic updates again remain linear.

Our $O^\ast$ hides polynomial factors from recursion/DP table management, 
conversions from $\chi_H$ to $P_{M(H)}$ and $E$, and the zeta product. 
Ring operations are counted in $\mathbb{Z}[u,v]$ (or on the $uv$–diagonal), 
with degrees $\le d_E\le n$. Coefficient growth is modest and dominated by 
the chromatic stage.

The computational bottleneck is exclusively graph–combinatorial ($\chi_H$).
Once $\chi_H$ is available, the motivic layer is lightweight and scales linearly
with the number of thresholds and cohomological degrees.

\section{Experimental Verification}
\label{sec:expts}

We evaluate Algorithm~CPA on a controlled ``ring-size recognition'' task that
separates 5- vs.\ 6-member cycles. We compare against a lightweight 1-skeleton PH baseline built from
$b_0/b_1$ traces along the threshold chain. The graphic specialization lets us
compute per-threshold $E$-polynomials from chromatic polynomials via
Prop.~\ref{prop:chromatic-to-poincare}, and per-event jumps via the
deletion–contraction identity (Thm.~\ref{thm:delcon-jump}).

\paragraph{Data and filtration.}
We generate two balanced classes of unlabeled cycle graphs:
\(\mathcal{C}_5=\{C_5\}\) and \(\mathcal{C}_6=\{C_6\}\), with 30 instances each.
Each instance follows a \emph{spanning-tree-first} threshold schedule: all edges
except one are assigned weights in \([0.1,0.3]\), while the single cycle-closing
edge receives weight \(0.95\).
This design forces the baseline to observe exactly one $H_1$-birth at the
\emph{last} event for both ring sizes, exposing a known blind spot.

\paragraph{Methods.}
\emph{Baseline (PH, 1-skeleton).} We record $b_0(t)$ and $b_1(t)$ at the (normalized)
thresholds and aggregate standard, scale-invariant summaries:
area-under-curve AUC$(b_0)$, AUC$(b_1)$, final $b_1$, \# of $b_1$-jumps, and the
normalized birth time of $b_1$.
\emph{Ours (CPA, graphic).} At each thresholded subgraph \(H_j\) we compute
\(P_{M(H_j)}(t) = (-t)^{|V|}\chi_{H_j}(-1/t)\) and read off the Betti vector (hence the
$E$-polynomial via $t\mapsto uv$). For forests (all pre-final steps) we use the
closed form \(P_{M(H_j)}(t)=(1+t)^{\,|E(H_j)|}\); for the terminal cycle
\(H_m=C_n\), we use \(P_{M(C_n)}(t)=(1+t)^n - t^{n-1}(1+t)\); see \S\ref{sec:prelim}.  We
vectorize by concatenating the per-threshold Betti vectors (padded to a fixed
shape) and use 1-NN (Euclidean) with leave-one-out (LOO) evaluation \cite{CoverHart1967}.

\paragraph{Results.}
Table~\ref{tab:expt-summary} summarizes accuracy; the CPA features separate
\(C_5\) from \(C_6\) perfectly, while the PH baseline performs near chance.

\begin{table}[!h]
  \centering
  \caption{Ring-size recognition (LOO 1-NN). Runtimes are per graph on our machine.}
  \label{tab:expt-summary}
  \begin{tabular}{lcc}
    \hline
    Method & Accuracy (LOO) & Avg.\ time / graph (ms) \\
    \hline
    PH baseline (1-skeleton) & 0.55 & 0.042 \\
    CPA (per-threshold $E$-profiles) & \textbf{1.00} & 0.143 \\
    \hline
  \end{tabular}
\end{table}

We compute a paired McNemar statistic (continuity-corrected) comparing our
method to the baseline \cite{McNemar1947}. Let \(b\) be the number of cases with \emph{baseline correct
\& CPA wrong}, and \(c\) the number with \emph{baseline wrong \& CPA correct}.
We obtain \(b=0\), \(c=27\), yielding \(\chi^2\approx 25.04\), confirming a highly
significant improvement.

In this setup the PH baseline on the 1-skeleton records only that a \emph{single}
cycle is born at the final event in both classes, so most summaries coincide. By
contrast, the CPA features read the chromatic/arrangement structure at \emph{every}
threshold. For cycles,
$\chi_{C_n}(q)=(q-1)^n + (-1)^n(q-1)$, which implies 
$P_{M(C_n)}(t)=(1+t)^n - t^{\,n-1}(1+t)$.
Pre-final subgraphs are forests with \(P(t)=(1+t)^{|E(H_j)|}\).
Consequently the \(E\)-profiles (hence Betti vectors) depend on \(n\) throughout the
chain, cleanly separating \(C_5\) from \(C_6\).

Note that absolute runtimes vary by environment; nevertheless, 
the qualitative pattern (perfect separation with negligible overhead for CPA) is robust.

\section{Conclusion}
\label{sec:conclusion}

We introduced \emph{Algorithm CPA}, an event–driven procedure that computes, 
along a graph threshold filtration, per–event jumps and per–threshold 
$E$–polynomials of graphic arrangement complements via deletion–contraction and chromatic polynomials. In the graphic (Hodge–Tate) case, these reduce to discrete, combinatorial quantities with computable algebraic control; the computational bottleneck is evaluating $\chi_H$, which is worst–case hard but fixed–parameter tractable in treewidth and admits closed forms for basic graph families.

In machine learning, CPA offers a lightweight, plug–in set of \emph{global} descriptors that complement local, message–passing representations. Concretely:
\begin{itemize}
  \item \textbf{Feature channels.} Per–threshold $E$–profiles (Betti vectors per event) and 
jump classes (contracted complements) provide fixed–length, order–aware summaries. 
Truncated barcode–zeta coefficients offer an even more compact alternative with multiplicative pooling. 
These can be concatenated to learned embeddings, used as graph–level covariates, or fed into classical models (SVMs/kernels) as stand–alone features.
  \item \textbf{Inductive bias and interpretability.} Because the channels are defined through $\chi_H$ and deletion–contraction, they are explicitly \emph{ring/cycle sensitive} and thus encode global shape information often underrepresented in message passing. Their coefficients admit combinatorial explanations (colorings, minors), yielding clear attribution: which edge insertions (events) and which contractions drive a decision.
  \item \textbf{Stability and efficiency.} The event–driven design depends only on the \emph{order} of edge weights; rescalings that preserve order leave outputs unchanged, and small perturbations that do not swap event order have no effect. For sparse, low–treewidth graphs common in chemistry and other domains, dynamic programming and closed forms make offline precomputation fast; training-time overhead is essentially zero (read features from cache).
  \item \textbf{Drop-in integration.} In GNN pipelines, CPA features can be (i) concatenated to graph-level readouts, (ii) injected as conditioning signals (e.g., FiLM/gating) to steer message passing toward cycle-aware regimes, or (iii) used as targets in auxiliary self-supervised tasks (predict $E$–profiles from local views) to regularize representations.
\end{itemize}

In particular, for molecular property prediction and retrieval, CPA supplies deterministic, interpretable descriptors tied to ring systems and cyclic scaffolds, which are prime determinants of reactivity and physico-chemical behavior. They complement ECFP/Weisfeiler–Lehman fingerprints and neural message passing by capturing arrangement/chromatic structure that varies with ring size and composition. Because CPA is event–driven and algebraic, it supports dataset–wide caching, fast ablations (toggle events/minors), and counterfactual analyses (edit an edge; recompute a single step).

In summary, the algorithm developed in this paper makes Hodge-refined, event–driven persistence \emph{practical} on graphs; it bridges algebraic topology, arrangement theory, and graph algorithms to yield stable, interpretable, and computationally efficient features that can be immediately deployed in modern machine learning workflows on graph-structured data such as chemical compounds. We also plan to develop variants of our method for applications at the intersection of logic, category theory and machine learning \cite{Mar09,Mar10,Mar12,Mar13a,Mar13b,Mar20a,Mar20b}.

\bibliographystyle{splncs04}

\end{document}